\def\ps@headings{%
\def\@oddhead{\mbox{}\scriptsize\rightmark \hfil \thepage}%
\def\@evenhead{\scriptsize\thepage \hfil \leftmark\mbox{}}%
\def\@oddfoot{}%
\def\@evenfoot{}}
\definecolor{rltblue}{rgb}{0,0,0.75}
\newcommand{\F}{\mathbf{F}}
\newcommand{\C}{\mathcal{C}}
\newcommand{\N}{\mathcal{N}}
\newtheorem{theorem}{\textbf{Theorem}}
\newtheorem{lemma}[theorem]{\textbf{Lemma}}
\newtheorem{definition}[theorem]{\textbf{Definition}}
\newtheorem{example}{Example}
\newcommand{\nix}[1]{}
\begin{document}
\title{\LARGE  Protection Over Asymmetric Channels\\ S-MATE: Secure Multipath Adaptive Traffic Engineering}

\author{Salah A. Aly$^{\dag\ddag}$~~~~~~ Nirwan Ansari$^\dag$~~~~~~H. Vincent Poor$^\ddag$~~~~~~Anwar I. Walid$^\S$ \\
 $^\dag$Dept. of Electrical  Engineering, New Jersey Inst. of Tech., Newark, NJ 07102, USA\\
$^\ddag$ Dept. of Electrical Engineering,~ Princeton University, ~Princeton, ~NJ 08544, ~USA\\
$^\S$Bell Laboratories~ \& Alcatel-Lucent, ~Murray Hill,  ~NJ 07974, ~USA
\thanks{The material in this paper was presented in part at  ICC'10,  Cape Town, South Africa, May 23-27, 2010. Available: arXiv:1010.4858 }
}
\markboth{Submitted, December 2010 }
{Submitted, December 2010 }
%\markboth{Submitted to\emph{IEEE Journal on Selected Areas in Communications,} December 2010 }
%{Submitted to\emph{ IEEE Journal on Selected Areas in Communications,} December 2010 }
 \maketitle
\begin{abstract}
There have been several approaches to the problem of provisioning  traffic engineering between core network  nodes  in Internet Service Provider (ISP) networks. Such approaches aim to minimize network delay, increase capacity, and enhance security services between two core (relay) network nodes, an ingress node and an egress node.  MATE (Multipath Adaptive Traffic Engineering) has been proposed for multipath adaptive traffic engineering between an ingress node (source) and an egress node (destination) to distribute the network flow among multiple disjoint paths. Its novel idea is to avoid network congestion and attacks that might exist in edge and node disjoint paths between two core network nodes.

This paper aims to develop an adaptive, robust, and reliable traffic engineering scheme to improve performance and reliability of communication networks. This scheme will also provision  Quality of Server (QoS) and protection of  traffic engineering to maximize network efficiency. Specifically,   S-MATE  (secure MATE) is proposed to  protect the network traffic between two core nodes (routers, switches, etc.) in a cloud network. S-MATE secures against a single link attack/failure by adding redundancy in one of the operational redundant paths between the sender and receiver nodes. It is also extended to secure against multiple attacked links. The proposed scheme can be  applied to secure core networks  such as optical and IP networks.
\end{abstract}
\begin{keywords}
MATE Protocol, Network Coding, Adaptive Traffic Engineering, Internet Protection and Security.
\end{keywords}
\section{Introduction}\label{sec:intro}

Several approaches have been proposed for adapting the traffic between core network nodes  in Internet Service Provider (ISP) networks~\cite{elwalid02,he06,kandula05}.  Elwalid \emph{et al.}~\cite{elwalid02}  proposed an algorithm for multipath adaptive traffic engineering between an ingress node (source) and an egress node (destination). Their novel idea is to avoid network congestion that might exist in disjoint paths between two core  network nodes. They suggested load balancing among paths based on measurement and analysis of path congestion by using Multi-Protocol Label Switching (MPLS). MPLS  is a widely adopted tool for facilitating traffic engineering unlike explicit routing protocols, which allow certain routing methodology from hop-to-hop in a network with multiple core devices. The major advantage of MATE is that it does not require scheduling, buffer management, or traffic priority in the nodes.

In this work, we propose a new scheme, Secure Multipath Adaptive Traffic Engineering (S-MATE), that aims to protect/secure multiple disjoint paths for network traffic. S-MATE enables reliable data delivery and provides protection against link and router failures. The main feature of S-MATE is that the protection is achieved without retransmitting the lost packets or resending the ACK/NACK messages at the receivers. The sender keeps sending its data at a regular rate once the key $k$-disjoint paths are established. In addition, the proposed scheme provisions load balancing, meaning that the redundant data is distributed fairly among the available provisioned disjoint paths. Furthermore, once a certain path experiences delay or high risk of failures, the proposed scheme is modified to provide quality of service (QoS) traffic engineering. The latter scheme is referred to as QoS-S-MATE.

\begin{figure}[t]
\begin{center}
  % Requires \usepackage{graphicx}
  \includegraphics[scale=0.62]{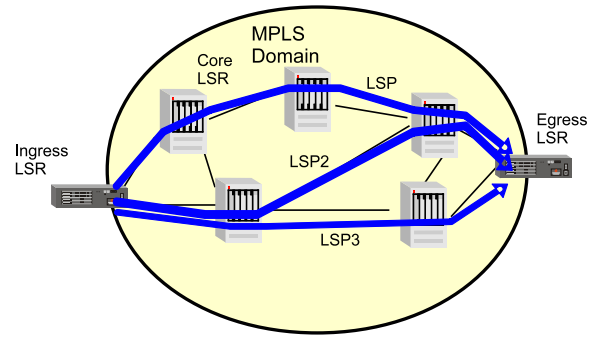}
  \caption{The network model is represented by two network nodes, ingress node (source) and egress node (receiver). There are $k$ link disjoint paths between the ingress and egress nodes. The link disjoint multipaths are established by using a network management software at the core routers.}
  \label{fig:netfig1}
  \end{center}
\end{figure}

Several recovery mechanisms against failures are proposed to ensure reliability and delivery of transmitted data by the core router nodes in the presence of link and relay failures~\cite{osborne02},\cite{sharma03},\cite{vasseur04}. These mechanisms also aim to guarantee  the Service Level Agreements (SLAs). Failures of links  and routers occur due to several reasons such as network component imperfections and changes of network topology. However, the protection operation is a challenging task because once the failure occurs the network traffic has to be rerouted among other routers, or delayed in the links for a short period of time. Such circumstances are unexpected and challenging for the network operators. One way to ensure data delivery is to establish backup paths between ingress and egress nodes.

Network coding is a powerful tool that has been recently used to increase
the throughput, capacity, and performance of wired and wireless communication networks.
Information theoretic aspects of network coding have been investigated
in References~\cite{ahlswede00},\cite{soljanin07},\cite{yeung06}, and in the list of references therein. It offers benefits in terms of
energy efficiency, additional security, and reduced delay. Network coding
allows the intermediate nodes not only to forward packets using network
scheduling algorithms, but also to encode/decode them through algebraic
primitive operations~\cite{ahlswede00,fragouli06,soljanin07,yeung06}. For example, data loss
because of failures in communication links can be detected and recovered
if the sources are allowed to perform network coding operations~\cite{cai06,gkantsidis06,jaggi07}.

Multipath Adaptive Traffic Engineering (MATE), which was previously proposed by one of the authors of this paper, is a traffic load balancing scheme that is suitable for S-MATE (secure MATE) as will be explained later.  MATE distributes traffic among the edge disjoint paths, so as to equalize the path delays. This is achieved by using adaptive algorithms. MATE has inspired other traffic engineering solutions such as TexCP~\cite{kandula05} and the measurement-based optimal routing solution~\cite{guven06}.
In this paper, we will design a security scheme by using network coding to protect against an entity who can not only copy/listen to the message, but also can fabricate new messages or modify the current ones. We aim to build an adaptive, robust, reliable traffic engineering scheme for better performance and operation of communication networks. The scheme will also provision  QoS and protection of traffic engineering to maximize network efficiency.

The rest of the paper is organized as follows. In Section~\ref{sec:model}, we present the network model and assumptions. In Sections~\ref{sec:MATE}, \ref{sec:NPSscheme} and~\ref{sec:SMATE}, we review the MATE algorithm and propose the secure MATE scheme based on network coding.  S-MATE against single and multiple attacks is presented in Sections~\ref{sec:twopaths}, \ref{sec:Tfilures}, \ref{sec:distributedcapacities}  and~\ref{sec:implementation}. Finally, Section~\ref{sec:conclusion} concludes the paper.
%\newpage

%%%%%%%%%%%%%%%%%%%%%%%%%%%%%%%%%%%%%%%%%%%%%%%%%%%%%%%%%%%%%%%%%%%%%%%%%%%%
\section{Network Model and Assumptions}\label{sec:model}

The network model can be represented as follows. Assume a given network represented by a set of nodes and links.  The network nodes are core nodes that transmit outgoing packets to the neighboring nodes in certain time slots.  The network nodes are ingress and egress nodes that share multiple edge and node disjoint paths.

We assume that the core nodes share $k$ edge disjoint paths, as shown in Fig.~\ref{fig:netfig1}, for one particular pair of ingress and egress nodes. Let $N=\{N_1,N_2,...\}$  be the set of nodes (ingress and egress nodes) and $L=\{L_{\ell h}^1,L_{\ell h}^2,...,L_{\ell h}^k\}$ be the set of disjoint paths from an ingress node $N_{\ell}$ to an egress node $N_{h}$.  Every path $L_{\ell h}^i$ carries  segments of independent packets from an ingress node $N_\ell$ to egress node $N_h$. Let $P^{ij}_{\ell h}$ be the packet sent from the ingress node $N_\ell$ in path $i$ at time slot $j$ to the egress node $N_h$. For  simplicity, we describe the proposed scheme for one particular pair of ingress and egress nodes. Hence, we use $P^{ij}$ to represent a packet in path $i$ at time slot $j$.

Assume there are $\delta$ rounds (time slots) in a transmission session.  For the remainder of the paper, rounds and time slots will be used interchangeably. Packet $P^{ij}$ is indexed as follows:
\begin{eqnarray}\label{eq:plainpacket}
Packet_{\ell h}^{ij}(ID_{N_\ell}, X^{ij},round_j),
\end{eqnarray}
where $ID_{N_\ell}$ and $X^{ij}$ are the sender ID and transmitted data from $N_\ell$ in the path $L_i$ at time slot $j$.
There are two types of packets: plain and encoded packets. The plain packet contains the unencoded data from the ingress to egress nodes as shown in Equation~(\ref{eq:plainpacket}). The encoded packet contains encoded data from different incoming packets. For example, if there are $k$ incoming packets to the ingress node $N_l$, then the encoded data traversed in the protection path $L_{l h}^j$ to the egress node $N_h$ is given by

\begin{eqnarray}\label{eq:encodeddata}
y^j=\sum_{i=1, j\neq i}^k P_{l h}^{ij},
\end{eqnarray}
where  the summation denotes the binary addition. The corresponding packet becomes
\begin{eqnarray}\label{eq:encodedpacket}
Packet_{\ell h}^{ij}(ID_{N_\ell}, y^{j},round_j).
\end{eqnarray}
%%%%%%%%%%%%%%%%%%%%%%%%%%%%%%%%%%

The following definition describes the \emph{working} and
\emph{protection} paths between two network switches as shown in
Fig.~\ref{fig:netfig1}.

\begin{definition}
The \emph{working paths} in a network with $n$ connection paths carry
un-encoded (plain) traffic under normal operations. The \emph{protection paths} provide
alternate backup paths to carry encoded traffic. A
protection scheme ensures that data sent from the sources will reach the
receivers in case of failures in the working paths.
\end{definition}

We make the following assumptions about the transmission of the plain and encoded packets.
\begin{compactenum}[i)]
\item The TCP protocol will handle the transmission and packet headers in the edge disjoint paths from the ingress to egress nodes.
\item The data from the ingress nodes  are sent in rounds and sessions throughout the edge disjoint paths to the egress nodes. Each session is quantified by the
    number of rounds (time slots) $n$. Hence, $t_j^\delta$ is the transmission time at the time
    slot $j$ in session $\delta$.

\item The attacks and failures on a path $L_i$ may be incurred by a network incident such as an eavesdropper, link replacement, and overhead. We
    assume that the receiver is able to detect a failure, and our
    protection strategy described in S-MATE is able to recover it.
\item We assume that the ingress and egress nodes share a set of $k$ symmetric     keys. Furthermore, the plain and encoded data are encrypted by using this set of keys. That is $$x^i=Encypt_{key_i} (m^i),$$ where $m_i$ is the message encrypted by the $key_i$. Sharing symmetric keys between two entities (two core network nodes) can be achieved by using key establishment protocols described in ~\cite{menezens01} and~\cite{schneier96}.
\item In this network model, we consider only a single link failure or attack; it is thus sufficient to apply the encoding and decoding operations over a finite
    field with two elements,  denoted as $\F_2=\{0,1\}$.
\end{compactenum}

%\subsection{Problem Setup}

 The traffic from the ingress node to the egress node in edge disjoint paths can be exposed to edge failures and network attacks. Hence, it is desirable to protect and secure this traffic. We assume that there is a set of $k$ connection paths that need to be fully guaranteed and protected
 against  a single  edge failure from ingress to egress nodes. We assume that all
connections have the same bandwidth, and each link (one hop or circuit) has the same bandwidth as the path.

The benefits of the proposed solutions include the following:
\begin{compactenum}[i)]
\item network protection is provisioned,
\item recovery is achieved without retransmitting the lost packets,
\item the sender can transmit  at a constant high rate,
\item the lost packets are recovered at the receiver online without sending an ACK message or notifying the sender about the failure, and
\item the network traffic is not rerouted or delayed.
\end{compactenum}

\section{MATE Protocol}\label{sec:MATE}
MPLS (Multipath Protocol Label Switching) is an emerging tool for facilitating network traffic  and out-of-band control.   Unlike explicit routing protocols, which allow certain routing methodology from hop-to-hop in a network with multiple core devices, MPLS balances network traffic.
As shown in Fig.~\ref{fig:netfig2}, MATE assumes that several explicit paths between an ingress node and an egress node in a cloud network  have been established. This is a typical setting which exists in operational Internet Service Providers (ISP) core networks (which implement MPLS).  The goal of the ingress node is to distribute traffic across the edge disjoint paths, so that the loads are balanced. One advantage of this load balancing is to equalize path delays, and to minimize traffic congestion~\cite{elwalid01,elwalid02}.

%%%%%%%%%%%%%%%%%%%%%%%%%%%%%%%%%%%%%%%%%%%%%%%%%%%%%%%%%%%%%%%%%%%%%

\begin{figure}[t]
\begin{center}
  % Requires \usepackage{graphicx}
  \includegraphics[scale=0.6]{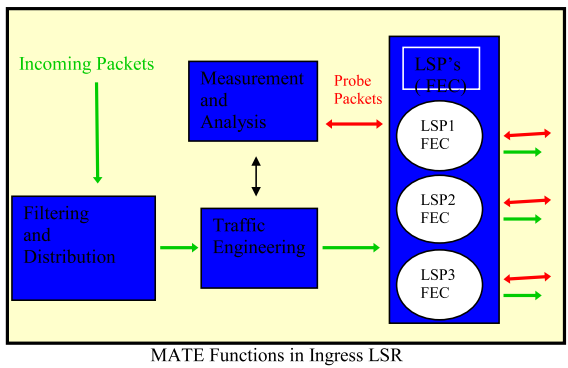}
  \caption{MATE traffic engineering at the ingress node.}
  \label{fig:netfig2}
  \end{center}
\end{figure}
The following are the key features of the MATE algorithm.
\begin{compactenum}[1)]
\item The traffic is distributed at the granularity of the IP flow level. This ensures that packets from the same flow follow the same path, and hence there is no need for packet re-sequencing at the destination.   This is easily and effectively  achieved by using a hashing function on the five tuple IP address.
\item
MATE  is a traffic load balancing scheme, which is suitable for S-MATE,  as will be explained later.  MATE distributes traffic among the edge disjoint paths, so as to equalize the paths delays. This is achieved by using adaptive algorithms as shown in Fig.~\ref{fig:netfig2} and Reference~\cite{elwalid02}

\item It is shown that the distributed load balancing (for each ingress,  egress pair) is stable and provably convergent. MATE assumes that several network nodes exist between ingress nodes as traffic senders and egress nodes as traffic receivers.  Furthermore, the traffic can be adapted by using switching protocols such as CR-LDP~\cite{rosen98} and RSVP-TE~\cite{bertsekas98}. An ingress node is responsible for managing the traffic in the multiple paths to the egress nodes so that  traffic congestion and overhead are minimized.
\end{compactenum}

As shown in Fig.~\ref{fig:netfig2}, Label Switch Paths (LSPs) from an ingress node to an egress node are provisioned before the actual packet is transmitted.  Then, once the transmissions start, the ingress node will estimate the congestion that might occur in one or more of the $k$ edge disjoint paths. As stated in Reference~\cite{elwalid02}, the congestion measure is related to one of the following factors: delay, loss rate, and bandwidth.  In general, each ingress node in the network will route the incoming packets into the $k$ disjoint paths. One of these paths will carry the encoded packets, and  all  other $k-1$ paths will carry plain packets. Each packet has its own routing number, so that the egress node will be able to manage the order of the incoming packet, and thus achieve the decoding operations.

As explained in~\cite{elwalid02}, MATE  works in two phases: a monitoring phase and a load balancing phase.  These two phases will monitor the traffic and balance packets among all disjoint paths. One beneficial feature of MATE is that its load balancing algorithms equalize the derivative of delay among all edge disjoint paths from an ingress node to an egress node. Furthermore, MATE's load balancing  preserves packet ordering since load balancing is done at the flow level (which is identified by a 5-tuple IP address) rather than at the packet level

We ensure that the proposed protocol in the following section is suitable for Internet traffic such as voice over IP (VoIP), multimedia teleconferencing, online gaming, TV streams. Such traffic is delay-sensitive and intolerant to late packet arrivals. This approach is different from other techniques for delay-sensitive traffic, including shortest path routing, or equal load-balancing splitting among multiple paths. As shown in a Cisco manuscript~\cite{cisco07-2012}, by 2012  video traffic will occupy $90\%$ of the total Internet traffic. Hence, techniques for delay minimization and online protection against failures are needed. Techniques that depend on shortest paths between ingress and egress nodes or on retransmitting  the lost packets appear to be impractical for delay sensitive traffics~\cite{suchara09}.

\section{Protection Using a
Dedicated Path}\label{sec:NPSscheme}
In this section, we present a Network Protection Strategy (NPS) against a single network failure. The single failure  could be one link or one core node (router or switch) in the given network topology.
Let $x_i^\ell$ be
the data sent from the source $s_i$ at round time $\ell$ in a session
$t_\delta^\ell$. Also, assume $y_j=\sum_{i=1,i \neq j}^k x_i^\ell$. Put
differently,
\begin{eqnarray}
y_{j}^\ell=x_1^\ell\oplus x_2^\ell\oplus \ldots  \oplus x_{i\neq j}^\ell  \oplus \ldots\oplus x_k^\ell.
\end{eqnarray}

The protection scheme  runs in sessions as explained below. Every session has at most
one single failure throughout  each round.

Some network topologies do not allow adding extra paths between the ingress and egress nodes. In this case, we propose that  one of the available working paths
can be used to carry the encoded data as shown in
~(\ref{scheme:k-1protection}). It  shows that  there exists a path $L_j$
that  carries the encoded data sent from the source $s_j$ to the receiver
$r_j$.

\begin{eqnarray}\label{scheme:k-1protection}
\begin{array}{|c|cccccc|c|c|}
\hline
\multicolumn{9}{|c|}{\mbox{ NPS scheme}}\\\multicolumn{9}{|c|}{\mbox{ }}\\
\hline
& \multicolumn{6}{|c|}{\mbox{ round time session 1 }}&\ldots&\ldots    \\
\hline
&1&2&3&\ldots&\ldots&n&\!\!\ldots&\ldots\\
\hline    \hline
  L_1 & x_1^1&x_1^2 &x_1^3&\ldots &\ldots  &x_1^n &\ldots&\ldots \\
    L_2 &  x_2^1&  x_2^2&x_2^3&\ldots&\ldots&x_2^{n} &\ldots&\ldots \\
L_3&  x_3^1& x_3^2& x_3^3&\ldots&\ldots&x_3^{n} &\ldots&\ldots \\
 \vdots&\vdots&\vdots&\vdots&\vdots&\vdots&\vdots&\ldots&\ldots\\
    L_i& x_i^1 & x_i^2 &\ldots&x_i^{i-1} &\ldots& x_i^{n}&\ldots&\ldots\\
     \vdots&\vdots&\vdots&\vdots&\vdots&\vdots&\vdots&\ldots&\ldots\\
     L_j & y_j^1&y_j^2&y_j^3&\ldots&\ldots&y_j^{n}&\ldots&\ldots\\
 \vdots&\vdots&\vdots&\vdots&\vdots&\vdots&\vdots&\ldots&\ldots\\
   L_k & x_k^1&x_k^2&x_k^3&\ldots&\ldots&x_{k}^n&\ldots&\ldots\\
\hline
\hline
\end{array}
\end{eqnarray}
All $y_j^\ell$'s are defined over $\F_2$ as \begin{eqnarray}
y_j^\ell=\sum_{i=1,i\neq j}^k x_i^\ell.
\end{eqnarray}

Note that the encoded data $y_j^\ell$ is fixed per one session
transmission but it is varied for other sessions. This means that the path
$L_j$ is dedicated to sending all encoded data $y_j^1,y_j^2,\ldots,y_j^n$, for all $1\leq j \leq k$. The normalized capacity of this scheme is still $(n-1)/n$.

\begin{lemma}
The normalized capacity of \textbf{NPS}  described
in~(\ref{scheme:k-1protection}) is given by
\begin{eqnarray}
\C=(k-1)/(k),
\end{eqnarray}
where $k$ is the number of disjoint paths.
\end{lemma}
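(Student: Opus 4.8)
The plan is to read the normalized capacity directly off the transmission table~(\ref{scheme:k-1protection}) by a bandwidth-counting argument; essentially all of the content is in fixing precisely what ``normalized capacity'' means and then counting rows. I take $\C$ to be the ratio of the useful (source) information delivered per round to the total transmission bandwidth consumed per round, with all $k$ disjoint paths assumed to carry one packet per round of equal bandwidth, as stipulated in the model of Section~\ref{sec:model}. Here $k$ denotes the number of disjoint paths, matching the table and the lemma statement.

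First I would classify the rows of the table. Fixing a round $\ell$, the $k-1$ working paths $L_i$ with $i\neq j$ each carry one plain source packet $x_i^\ell$, while the single protection path $L_j$ carries the encoded packet $y_j^\ell=\sum_{i\neq j} x_i^\ell$ over $\F_2$. Thus the source injects exactly $k-1$ data units per round. Next I would observe that the protection path contributes no independent source information: $y_j^\ell$ is a fixed $\F_2$-linear combination of the $k-1$ plain packets of the same round, hence a deterministic function of them. The useful throughput is therefore $k-1$ units per round against a total of $k$ units of bandwidth (all $k$ paths active), giving $\C=(k-1)/k$. Counting instead over a whole session of $n$ rounds delivers $(k-1)n$ useful packets out of $kn$ transmitted, the same ratio, so the capacity is independent of the session length $n$.

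The only point requiring genuine justification --- the ``main obstacle,'' such as it is --- is that $(k-1)/k$ is the capacity and not merely an upper bound on useful throughput; one must confirm it is \emph{achievable}, i.e.\ that none of the $k-1$ source units is ever lost. This is exactly the single-failure protection guarantee: if a working path $L_i$ fails in round $\ell$, the receiver recovers $x_i^\ell = y_j^\ell \oplus \bigoplus_{m\neq i,\,j} x_m^\ell$ from the surviving plain packets and the protection packet, and if the protection path itself fails no source packet is affected. Hence all $k-1$ source units per round are always recoverable, the rate is attained, and the counting is tight, establishing $\C=(k-1)/k$.
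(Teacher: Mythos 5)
Your argument is correct and follows essentially the same route as the paper: count $k-1$ un-encoded data units against $k$ total transmissions per round (equivalently $(k-1)n$ out of $kn$ over a session) to obtain $\C=(k-1)/k$. The only difference is that you additionally verify achievability via the single-failure recovery guarantee, a point the paper's proof omits but which strengthens rather than changes the counting argument.
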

\begin{proof}
We have $n$ rounds and the total number of transmitted packets in every
round is $k$. Also, in every round there are $(k-1)$ un-encoded data
$x_1,x_2,\ldots x_{i\neq j},\ldots,x_{k}$ and only one encoded data $y_j$,
for all $i=1,\ldots,n$. Hence, the capacity $c_\ell$ in every round is
$k-1$. Therefore, the normalized capacity is given by
\begin{eqnarray}
\C=\frac{\sum_{\ell=1}^n c_\ell}{k*n}=\frac{(k-1)*n}{kn}.
\end{eqnarray}
%See~\cite{aly08i}.
\end{proof}

 The following lemma shows that the network protection strategy
{\bf NPS} is in fact optimal if we consider the field $\F_2$. In other words, there
exist no other strategies that give better normalized capacity than
{\bf NPS}.

\begin{lemma}
The network protection shown in~(\ref{scheme:k-1protection}) against a single link failure is optimal.
\end{lemma}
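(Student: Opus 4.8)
The plan is to read ``optimal'' as an upper bound on normalized capacity: I would show that \emph{every} protection scheme that uses the same $k$ disjoint paths over $\F_2$ and recovers from the failure of any single link must satisfy $\C \le (k-1)/k$, and then invoke the preceding lemma, which already shows that \textbf{NPS} attains this value. First I would fix the failure model consistent with the assumptions of Section~\ref{sec:model}: a single link failure in a session erases the symbols carried on one of the $k$ paths, so the decoder observes only the $k-1$ surviving paths. Over a session of $n$ rounds each path carries $n$ symbols of $\F_2$, so the total number of transmitted symbols is $kn$, while the number available to the decoder after any single-path erasure is $(k-1)n$. This is essentially a single-erasure Singleton / cut-set setting.

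The heart of the argument is a counting (pigeonhole) bound on how much information can survive the loss of one path. Let $I$ be the number of independent information symbols the scheme delivers per session, so the source data ranges over a set of size $2^{I}$. Recovery from the loss of path $i$ means the map from the source data to the symbols on the remaining $k-1$ paths is injective; otherwise two distinct data vectors would produce identical observations on every surviving path and could not be distinguished. Since that map has domain of size $2^{I}$ and codomain of size at most $2^{(k-1)n}$, injectivity forces $I \le (k-1)n$. Equivalently, in entropy terms, writing $X$ for the source data and $Y_{\{1,\dots,k\}\setminus\{i\}}$ for the surviving observations, recoverability gives $H\bigl(X \mid Y_{\{1,\dots,k\}\setminus\{i\}}\bigr)=0$, hence $H(X) \le H\bigl(Y_{\{1,\dots,k\}\setminus\{i\}}\bigr) \le (k-1)n$ since each path contributes at most $n$ bits.

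Dividing by the $kn$ transmitted symbols then yields $\C = I/(kn) \le (k-1)/k$, which is exactly the value achieved by \textbf{NPS} in the preceding lemma; hence no scheme over $\F_2$ can do strictly better, establishing optimality. I expect the main obstacle to be not the inequality itself but pinning down the competing-scheme model so that the bound is honest: the argument must allow an adversary-chosen single failure and an \emph{arbitrary} encoding, including coding across rounds rather than within a single round. The counting/entropy formulation handles this cleanly, because it compares the total delivered information against the total surviving observations and never assumes the encoding is memoryless or linear; the only structural inputs needed are that a failure removes one path's worth of symbols and that recovery must succeed for \emph{every} choice of the failed path.
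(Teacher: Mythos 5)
Your proof is sound, but it is worth knowing that the paper itself offers \emph{no} proof of this lemma: the statement is simply asserted, with the surrounding prose only explaining that ``optimal'' is meant in the sense that no other strategy over $\F_2$ achieves a larger normalized capacity. Your cut-set/counting argument is therefore not a variant of the paper's proof but a genuine filling of a gap, and it matches the paper's declared interpretation of optimality exactly. The argument is the standard single-erasure converse: recoverability from the loss of any one path forces the map from source data to the $(k-1)n$ surviving symbols to be injective, so $I \le (k-1)n$ and $\C = I/(kn) \le (k-1)/k$, which the preceding lemma shows \textbf{NPS} attains. Two small points to make the write-up airtight. First, your bound relies on the erasure model (the decoder knows \emph{which} path failed); this is justified by assumption~(iii) of Section~\ref{sec:model}, which states the receiver can detect the failure, and you should cite that explicitly. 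Second, the paper's ``normalized capacity'' in the preceding lemma is defined combinatorially (the count of un-encoded symbols per round divided by $kn$) rather than information-theoretically; your $I/(kn)$ is the right generalization for a converse over arbitrary competing schemes, but you should state that any scheme delivering $c_\ell$ independent plain symbols in round $\ell$ satisfies $I \ge \sum_\ell c_\ell$, so your bound indeed dominates the paper's quantity. With those clarifications the proposal is complete and, frankly, more rigorous than what the paper provides.
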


The transmission is done in rounds, and hence  linear combinations of data
have to be from the same round. This can be achieved by using the round
time that is included in each packet sent by a sender.

\noindent \textbf{Encoding Process:} There are several scenarios in which the
encoding operations can be achieved. The encoding and decoding operations
will depend mainly on the network topology; how the senders and receivers
are distributed in the network.
 The encoding operation is done at only one source $s_i$ (ingress router). In this
    case, all other sources must send their data to $s_i$,  which will
    send encoded data over $L_i$. We assume that all sources share paths with each other.
\begin{figure}[t]
\begin{eqnarray}\label{eq:secMATE}
\begin{array}{|c|cccccc|c|}
\hline
& \multicolumn{6}{|c|}{\mbox{rounds from ingress to egress nodes}}&\ldots  \\
\hline
&1&2&3&\ldots&\ldots&n&\!\!\ldots\\
\hline    \hline
  L_{lh}^1& y^1&P^{11}&P^{12}&\ldots&\ldots&P^{1(n-1)} &\ldots   \\
    L_{lh}^2 &  P^{21}& y^2& P^{22}&\ldots&\ldots&\!\! P^{2(n-1)} &\ldots \\
L_{lh}^3 &  P^{31}& P^{32}&y^3& \ldots&\ldots&\!\! P^{3(n-1)} &\ldots \\
     \vdots&\vdots&\vdots&\vdots&\vdots&\vdots&\vdots&\ldots\\
     L_{lh}^j & P^{j1}&P^{j2}&\ldots&y^j&\ldots&\!\!P^{j(n-1)}&\ldots\\
 \vdots&\vdots&\vdots&\vdots&\vdots&\vdots&\vdots&\ldots\\
   L_{lh}^k & P^{k1}&P^{k2}&\ldots&\ldots&\!\!P^{k(k-1)}&y^n&\ldots\\
\hline
\hline
\end{array}
\end{eqnarray}
\end{figure}
%%%%%%%%%%%%%%%%%%%%%%%%%%%%%%%%%%%%%%%%%%%%%%%%%%%%%%%%%%%%%%%%%%%%%%%%%%%%%%
\section{S-MATE}\label{sec:SMATE}
We assume that the network management software at the router level will compute the available disjoint paths between ingress and egress routers given the traffic demands, network flow, and capacity of communication links. In addition, it determines the network topology, failure locations, and failure causes. The proposed protocols will minimize  congestion  in the network operation in the presence of failures. We can also use one of the methods proposed in~\cite{suchara09} to compute the available multiple disjoint paths and be aware of the routers' conditions.

Traffic splitting in MPLS is deployed in today's routers~\cite{osborne02}.  This is also done in a flexible way such that packets belonging to the same traffic or coming from the same IP source will travel throughout the same path. Also, the path failure detection can be done using  detection protocol such as Bidirectional Forwarding Detection (BFD)~\cite{katz09}. As explained in~\cite{suchara09}, BFD establishes connections between two routers, ingress and egress nodes, to monitor  the traffic paths.

We now propose a scheme for securing MATE, called S-MATE (Secure Multipath Adaptive Traffic Engineering). The basic idea of S-MATE can be described  by Equation~(\ref{eq:secMATE}). S-MATE inherits the traffic engineering components described in the previous section and in References~\cite{elwalid01} and~\cite{elwalid02}.

Without loss of generality, assume that the network traffic between a pair of ingress and egress nodes is transmitted in $k$
edge disjoint paths, each of which carries  different packets. The disjoint paths are already established between the core nodes using any provisioning mechanism. Our proposed solution will protect these disjoint paths in case a failure occurs in one (or more) particular link(s) throughout one (or more) paths.

 The transmission of ingress (source) packets is achieved in rounds. For simplicity, we assume that the number of edge disjoint paths and the number of rounds in one transmission session are equal. Otherwise, the total number of rounds can be divided into $k$ separate rounds.
There are  two types of packets:
\begin{compactenum}[i)]
\item {\bf Plain Packets:} Packets $P^{ij}$ sent without coding, in which the ingress node does not need
    to perform any coding operations. For example, in case of packets
    sent without coding, the ingress node  $N_l$ sends the following packet
    to the egress node  $N_h$:
\begin{eqnarray}
packet_{N_l \rightarrow N_h}(ID_{N_l},x^{ij},t_\delta^j),~for~ i=1,2,..,k, i\neq j.
\end{eqnarray}
The plain data $x^{ij}$ is actually the encryption of the message $m^{ij}$ obtained by using any secure symmetric encryption algorithm~\cite{menezens01}. That is, $x^{ij}=Encypt_{key_i} (m^{ij})$, where $key_i$ is a shared symmetric key between $N_l$ and $N_h$.

\item  {\bf Encoded Packets:} Packets $y^i$ sent with encoded data, in which the ingress node $N_l$
    sends other incoming data. In this case, the ingress node $N_l$ sends the following packet to egress node
$N_h$:
\begin{eqnarray}
packet_{N_l \rightarrow N_h}(ID_{N_l},
\sum_{i=1}^{j-1} x^{i~j-1}+\sum_{i=j+1}^k x^{ij},t^j_\delta).
\end{eqnarray}
The encoded packet will be used in case  any  of the working paths is compromised. The egress node will be able to detect the compromised data, and can recover it by using the data sent in the protection path.
\end{compactenum}

\begin{lemma}
The S-MATE scheme  is optimal against a single  link attack.
\end{lemma}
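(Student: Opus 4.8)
The plan is to prove optimality in two complementary steps: achievability, showing that the construction in~(\ref{eq:secMATE}) actually protects against one active link attack per round at normalized capacity $(k-1)/k$, and a converse, showing that no $\F_2$-scheme can protect against a single link failure at any higher capacity. Since an active attack is at least as severe as an erasure, the converse bound applies verbatim to S-MATE, and matching it yields optimality.

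First I would verify recovery under the active adversary. By assumption~iv) each plain symbol is a ciphertext $x^{ij}=Encypt_{key_i}(m^{ij})$ under a key shared only between $N_\ell$ and $N_h$, so the egress node can authenticate every arriving packet; a modified or fabricated symbol on the single attacked link fails verification and is thereby localized to one path in one round. Once the single corrupted symbol of round $j$ is identified, the rotating-parity layout of~(\ref{eq:secMATE}) permits reconstruction exactly as in~(\ref{scheme:k-1protection}): in each round one path carries the parity $y^j=\bigoplus_{i\neq j}x^{ij}$ while the other $k-1$ paths carry plain data, so a lost or corrupted plain symbol $x^{ij}$ is recovered as $x^{ij}=y^j\oplus\bigoplus_{i'\neq i,\,i'\neq j}x^{i'j}$, and a corrupted parity symbol needs no recovery since all plain data is already in hand. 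Counting the per-round budget then gives $\C=(k-1)/k$: each of the $n$ rounds sends $k-1$ plain packets and one encoded packet over the $k$ paths, so $\C=(k-1)n/(kn)=(k-1)/k$.

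For the converse I would invoke the optimality already established for NPS. That earlier lemma shows that over $\F_2$ no single-link-failure protection strategy attains normalized capacity exceeding $(k-1)/k$; the counting argument there caps the number of independent data symbols per round at $k-1$, since $k$ symbols all carrying independent data would make the loss of one unrecoverable. Because a single active attack subsumes a single erasure, any scheme that defeats the S-MATE adversary must in particular survive pure erasures, so the same ceiling $(k-1)/k$ binds. As S-MATE attains it, S-MATE is optimal.

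The hard part will be ensuring that the active-adversary model does not let a competing scheme evade the erasure-based converse. The crucial observation is that the shared keys buy detection and localization of the corrupted symbol, not additional rate: the per-round symbol budget across the $k$ paths is still $k$, so the counting bound underlying the NPS-optimality lemma is unchanged and the reduction from the active model to the erasure model is sound. With that reduction in place, achievability and converse meet at $(k-1)/k$ and optimality follows.
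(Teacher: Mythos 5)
Your proposal takes a genuinely different route from the paper. The paper offers no formal proof of this lemma at all: immediately after the statement it explains that ``optimal'' here means that the encoding and decoding are carried out over the binary field $\F_2$ with the least computational overhead, and the entire justification is the one-line observation that a single protection path suffices against a single attacked path. You instead read ``optimal'' as rate-optimal and supply an achievability-plus-converse argument: the rotating-parity layout of~(\ref{eq:secMATE}) achieves normalized capacity $(k-1)/k$, and a counting converse (borrowed from the NPS optimality lemma) shows no single-failure-resilient scheme over the $k$ paths can exceed it. Your version is the more substantive and checkable claim, and your reduction of the active-attack model to the erasure model is the right way to make the converse carry over; the paper's version buys only an informal statement about implementation cost. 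Note, though, that your converse ultimately rests on the NPS optimality lemma, which the paper also asserts without proof, so your counting sketch (that $k$ independent symbols per round would leave one loss unrecoverable) is doing real work and should be spelled out rather than cited.

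One step deserves more care: you derive detection and localization of the corrupted symbol from assumption~iv), arguing that because each plain symbol is a ciphertext under a shared key, ``the egress node can authenticate every arriving packet.'' Encryption by itself does not provide integrity or authentication; a symmetric cipher can leave an adversary free to modify ciphertexts into other valid-looking ciphertexts. The paper sidesteps this by simply assuming (assumption~iii) that the receiver can detect a failure and identify the affected path. If you want your recovery step $x^{ij}=y^j\oplus\bigoplus_{i'\neq i,\,i'\neq j}x^{i'j}$ to go through against an active adversary, you should either invoke that assumption directly or add an explicit authentication mechanism (e.g., a MAC) rather than attributing localization to the encryption layer. With that repair, your argument is sound and strictly stronger than what the paper provides.
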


What we mean by optimal here is that the encoding and decoding operations are achieved over the binary field with the least computational overhead. That is, one cannot find a better scheme than this proposed encoding scheme in terms of encoding operations. Indeed, one single protection path is used in case of a single attack path or failure.
The transmission is done in rounds (time slots), and hence  linear combinations of data
must be from the same round time. This can be achieved by using the
time slot that is included in each packet sent by the ingress node.

\begin{lemma}
The network capacity between the ingress node and the egress node is given by $k-1$ in the case of one single attack path.
\end{lemma}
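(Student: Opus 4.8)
The plan is to read ``network capacity'' as the number of distinct original (information) packets that the egress node $N_h$ can reliably deliver per transmission round, and then to pin the value at $k-1$ by matching a counting upper bound against an achievability (decodability) argument that survives one failed path. This is essentially the un-normalized restatement of the normalized capacity $\C=(k-1)/k$ already established for the \textbf{NPS} scheme, so a secondary goal is to keep the argument consistent with that lemma and with the optimality claim over $\F_2$.

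First I would establish the upper bound directly from the transmission pattern in~(\ref{eq:secMATE}). In every round exactly $k$ packets are launched across the $k$ edge disjoint paths; of these, $k-1$ carry distinct plain packets $P^{ij}$ and precisely one carries the encoded packet $y^j$ defined in~(\ref{eq:encodeddata}). Since $y^j=\sum_{i\neq j}P^{ij}$ is a deterministic function of the plain packets of that same round, it injects no new information, so at most $k-1$ information packets enter the network per round. Hence $\C\le k-1$, and this bound holds regardless of whether a failure occurs.

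Next I would prove achievability, i.e. that all $k-1$ information packets are actually recovered even under a single failed path, so the bound is attained. I would split into two cases for the single compromised path in a given round. If the failed path is one carrying a plain packet, say $P^{i_0 j}$, then the egress still holds $y^j$ together with the remaining $k-2$ plain packets of that round, and over $\F_2$ it recovers the lost packet as $P^{i_0 j}=y^j\oplus\bigoplus_{i\neq i_0,\,i\neq j}P^{ij}$. If instead the failed path is the one carrying the encoded packet $y^j$, then all $k-1$ plain packets arrive directly and no decoding is needed. In either case every information packet of the round is delivered, so $\C\ge k-1$; combined with the upper bound this gives $\C=k-1$.

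The step I expect to be the main obstacle is the index bookkeeping of the diagonal placement in~(\ref{eq:secMATE}): one must verify that each $y^j$ is the parity of exactly the $k-1$ plain packets sent in its own round, so that a single erasure always lands in a parity equation all of whose other terms survive. This is where the round-time tag carried in each packet (noted just before the statement) is used, since it guarantees that the linear combinations being inverted are confined to a single round; once that alignment is fixed, the $\F_2$ recovery in the first case is immediate.
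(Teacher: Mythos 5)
Your argument is correct, but it does strictly more than the paper, which states this lemma without any proof at all. The only places the paper argues for a capacity value are the proof of the normalized-capacity lemma for \textbf{NPS} and the proof of Lemma~\ref{lem:nps-t2}, and both consist solely of your first step: count that in each round one of the $k$ paths carries the redundant symbol $y^j$, which is a deterministic $\F_2$-combination of the other $k-1$ packets of that round, so only $k-1$ information packets are injected per round. Your upper bound therefore coincides exactly with the paper's implicit reasoning. The achievability half --- the case split on whether the single compromised path carries a plain packet or the encoded one, with recovery via $P^{i_0 j}=y^j\oplus\bigoplus_{i\neq i_0,\,i\neq j}P^{ij}$ --- is absent from the paper, yet it is precisely what is needed to justify calling $k-1$ a \emph{capacity} under an attack rather than merely a count of plain packets transmitted; without it one has only shown that the rate cannot exceed $k-1$, not that it is attained when a path actually fails. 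Your observation about the diagonal bookkeeping in~(\ref{eq:secMATE}) is also on target: the superscripts of the plain packets combined into $y^j$ are offset across the diagonal, but each $y^j$ is indeed the parity of exactly the $k-1$ plain packets launched in its own time slot, which is what makes the single-erasure equation solvable. In short, your proposal is a correct and more complete argument than anything the paper supplies for this statement.
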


\subsection{\textbf{Encoding Process}} There are several scenarios in which the
encoding operations can be achieved. The encoding and decoding operations
will depend mainly on the network topology,  i.e., how the senders and receivers
are distributed in the network.
\begin{itemize}
\item The encoding operation is done at only one ingress node $N_l$. In this case,
     $N_l$ will prepare and send the encoded data over $L_{lh}^j$ to the receiver $N_h$.
\item We assume that $k$ packets will be sent  in every transmission session from the ingress node. Also, if the number of incoming packets is greater than $k$, then a modulo function is used to moderate the outgoing traffic in $k$ different packets. Each packet will be sent in  one unique path.
    \item Incoming packets with large sizes will be divided into small chunks of equal size.
\end{itemize}

\subsection{\textbf{Decoding Process}}
The decoding process is done in a similar way as explained in the previous work shown in~\cite{aly08preprint1} and \cite{aly09-4}.

We assume that the ingress node $N_l$  assigns
the paths that will carry plain data as shown in
Fig.~\ref{fig:netfig3}. In addition, $N_l$ will encode the
data from all incoming traffic and send it in one path.
This will be used to protect  any  single link  attacks/failure. The objective is to withhold rerouting the signals or the transmitted packets due to link attacks. However, we provide strategies that utilize network coding and reduced capacity at the ingress nodes. We assume that the source nodes (ingress) are able to perform encoding operations and the receiver nodes (ogress) are able to perform decoding operations.\\

One of S-MATE's objectives is to minimize the delay of the transmitted packets. So, the packets from one IP address will be received in order in one path. The following are the key features of S-MATE.
\begin{itemize}
\item The traffic from the ingress node to the egress node is secured against eavesdropper and intruders.

    \item No extra paths in addition to the existing network edge disjoint paths are needed to secure the network traffic.
        \item It can be implemented without adding new hardware or network components.

\end{itemize}

\begin{figure}[t]
\begin{center}
  % Requires \usepackage{graphicx}
%    \vspace{2.2cm}
 \includegraphics[scale=0.70]{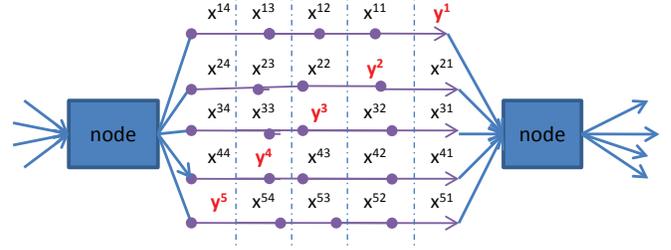} %paths33\\
 % \vspace{2cm}
  \caption{Working and protection edge disjoint paths between two core nodes. The protection path carries encoded packets from all other working paths between ingress and egress nodes.}
  \label{fig:netfig3}
  \end{center}
\end{figure}

The following example illustrates the plain and encoded data transmitted from five senders to five receivers.

\medskip

\begin{example}
Let $N_l$ and $N_h$ be two core network nodes (a  sender and receiver) in a cloud network. Equation~(\ref{eq:example1}) explains the plain and encoded data sent in five consecutive time slots from the  sender to the  receiver. In the first time slot, the first connection carries encoded data, and all other connections carry plain data. Furthermore, the encoded data is distributed among  all connections in the time slots $2, 3, 4$ and $5$.
\begin{eqnarray}\label{eq:example1}
\begin{array}{|c|ccccc|c|c|}
\hline
cycle& &~~~~1&&&&2&3\\
\hline
rounds&1&2&3&4&5&\ldots&\ldots\\
\hline
\hline
L_{lh}^1 &y^1&x^{11}  &x^{12} &x^{13} &x^{14}& \ldots&\ldots\\
L_{lh}^2 &x^{21}&y^2  &x^{22} &x^{23}&x^{24}&\ldots&\ldots\\
L_{lh}^3 &x^{31}  &x^{32}&y^3 &x^{33}&x^{34}&\ldots&\ldots \\
L_{lh}^4    &x^{41} &x^{42} & x^{43}&y^{4}&x^{44}&\ldots&\ldots\\
L_{lh}^5    &x^{51} &x^{52} & x^{53}&x^{54}&y^5&\ldots&\ldots\\
\hline
\end{array}
\end{eqnarray}
The encoded data $y^j$, for $1 \leq j \leq 5$,  is sent as

\begin{eqnarray}
y^j=\sum_{i=1}^{j-1} x^{i~j-1}+\sum_{i=j+1}^5 x^{ij}.
\end{eqnarray}
We notice that every message has its own time slot. Hence, the protection data is distributed among all paths for fairness.
\end{example}

%%%%%%%%%%%%%%%%%%%%%%%%%%%%%%%%%%%%%%%%%%%%%%%%%%%%%
%strategy against two path attacks.

\section{A Strategy Against two attacked Paths}\label{sec:twopaths}
In this section, we propose a strategy against two attacked paths (links), i.e., securing MATE against two-path attacks. The strategy is achieved by using network coding and dedicated paths.  Assume we have $n$ connections carrying data from an ingress node to an egress node. All connections represent disjoint paths.

We will  provide two backup paths to secure against any two disjoint paths,
which might experience any sort of attacks. These two protection paths can be chosen by
using network provisioning. The protection paths are fixed for all rounds
per session from the ingress node to the egress node, but they may vary among sessions. For example, the ingress node $N_l$  transmits a message $x^{i\ell}$ to  the egress node $N_h$ through path $L_{\ell h}^i$ at time $t_\delta^\ell$ in round time $\ell$ in session $\delta$. This process is explained
in Equation~(\ref{eq:n-1protection2}) as follows:

\begin{eqnarray}\label{eq:n-1protection2}
\begin{array}{|c|ccccc|c|}
\hline
& \multicolumn{5}{|c|}{\mbox{ cycle 1 }}&\ldots    \\
\hline
&1&2&3&\ldots&n&\!\!\ldots\\
\hline    \hline
 L_{lh}^1 & x^{11}&x^{12} &x^{13}&\ldots  &x^{1n} &\ldots \\
   L_{lh}^2 &  x^{21}&  x^{22}&x^{23}&\ldots&x^{2n} &\ldots \\
L_{lh}^3 &  x^{31}& x^{32}& x^{33}&\ldots&x^{3n} &\ldots \\
 \vdots&\vdots&\vdots&&\vdots&\vdots&\ldots\\
L_{lh}^i  & x^{i1} & x^{i2} &x^{i3}&\ldots& x^{in}&\ldots\\
     L_{lh}^j & y^{j1}&y^{j2}&y^{j3}&\ldots&y^{jn}&\ldots\\
    L_{lh}^k  & y^{k1}&y^{k2}&y^{k3}&\ldots&y^{kn}&\ldots\\
     L_{lh}^{i+1}& x^{(i+1)1} & x^{(i+1)2} &x^{(i+1)3}&\ldots& x^{(i+1)n}&\ldots\\
 \vdots&\vdots&\vdots&\vdots&\vdots&\vdots&\ldots\\
  L_{lh}^n & x^{n1}&x^{n2}&x^{n3}&\ldots&x^{nn}&\ldots\\
\hline
\hline
\end{array}
\end{eqnarray}
All $y_j^\ell$'s are defined as
\begin{eqnarray} y^{j\ell}=\sum_{i=1,i\neq j \neq k}^n a_i^\ell x^{i\ell} \mbox{  and  } y^{k\ell}=\sum_{i=1,i\neq k \neq j}^n b_i^\ell x^{i\ell}.
\end{eqnarray}

The coefficients $a_i^\ell$ and $b_i^\ell$ are chosen over a finite field
$\F_q$ with $q > n-2$; see~\cite{aly08preprint1,Ayanoglu93} for more details. One way to choose these coefficients is by using the following two vectors:
\begin{eqnarray}\label{eq:twovectors}
\left[\begin{array}{ccccc}
1&1&1&\ldots&1\\
1&\alpha&\alpha^2&\ldots&\alpha^{n-3}
\end{array}\right].
\end{eqnarray}
Therefore, the coded data is
\begin{eqnarray} y^{j\ell}=\sum_{i=1,i\neq j \neq k}^n  x^{i\ell} \mbox{  and  } y^{k\ell}=\sum_{i=1,i\neq k \neq j}^n \alpha^{i \mod n-2} x^{i\ell}.
\end{eqnarray}
In the case of two failures, the receivers will be able to solve two linearly independent equations with two unknown variables. For instance, assume the two failures occur in paths number two and four. Then, the receivers will be able to construct two equations with coefficients
\begin{eqnarray}\label{eq:twovectors2}
\left[\begin{array}{cc}
1&1 \nonumber\\
\alpha&\alpha^{3}
\end{array}\right].
\end{eqnarray}
 Therefore, we have
\begin{eqnarray}
x^{2\ell}+x^{4\ell}\\
\alpha x^{2\ell}+\alpha^3 x^{4\ell}.
\end{eqnarray}
One can multiply the first equation by $\alpha$ and subtract the two equations to obtain the value of $x^{4\ell}$.

Note that the encoded data symbols
$y^{j\ell}$ and $y^{k\ell}$ are fixed for one session, but they are
varied for other sessions. This means that the path $L_{lh}^j$ is dedicated to
send all encoded data $y^{j1},y^{j2},\ldots,y^{jn}$.
\begin{lemma}
The  network capacity of the protection strategy against two-path attacks is given by
$n-2$.
\end{lemma}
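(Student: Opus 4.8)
The plan is to mirror the per-round counting argument used for the single-failure lemma, and then to verify that the two redundant symbols sent each round are exactly what is needed --- and sufficient --- to recover from any two simultaneous path failures, so that the effective throughput per round is precisely $n-2$ plain symbols.

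First I would fix an arbitrary round $\ell$ and count the packets in scheme~(\ref{eq:n-1protection2}). Of the $n$ packets transmitted across the $n$ disjoint paths, exactly two --- those carried on $L_{lh}^j$ and $L_{lh}^k$ --- hold the protection symbols $y^{j\ell}$ and $y^{k\ell}$, while the remaining $n-2$ paths each carry a distinct plain symbol $x^{i\ell}$. Hence at most $n-2$ unencoded symbols are delivered per round, which already gives the upper bound $\C\le n-2$.

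The substance is to show this bound is attained, i.e. that all $n-2$ plain symbols survive \emph{any} two-path attack. I would split into cases by how many of the two failed paths carry protection data. If both failures occur on plain-data paths, with indices $a$ and $b$, then the two surviving protection equations $y^{j\ell},y^{k\ell}$ reduce to a $2\times2$ linear system in the unknowns $x^{a\ell},x^{b\ell}$ whose coefficient matrix is the submatrix of~(\ref{eq:twovectors}) formed from columns $a$ and $b$. The key step is that this submatrix, $\left[\begin{smallmatrix}1&1\\ \alpha^{a\bmod(n-2)}&\alpha^{b\bmod(n-2)}\end{smallmatrix}\right]$, has determinant $\alpha^{b\bmod(n-2)}-\alpha^{a\bmod(n-2)}\neq0$; this is exactly where the field-size requirement $q>n-2$ enters, since a primitive $\alpha\in\F_q$ has $\ord(\alpha)=q-1\ge n-2$, so distinct reduced indices yield distinct powers and the system is invertible. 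Both lost symbols are therefore recovered, precisely as illustrated for the failures on paths two and four in~(\ref{eq:twovectors2}). If instead one failed path carries protection data and the other plain data, the single surviving protection equation recovers the one lost plain symbol; and if both failed paths carry protection data, no plain symbol is lost at all.

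Combining the two directions, every round delivers exactly $n-2$ plain symbols no matter which pair of paths is attacked, so the network capacity equals $n-2$. The main obstacle is the invertibility claim --- establishing that every pair of columns of the $2\times n$ generator in~(\ref{eq:twovectors}) is linearly independent over $\F_q$, i.e. that the underlying code is MDS --- but once the condition $q>n-2$ is imposed this is the standard nonvanishing of a $2\times2$ Vandermonde determinant, and the remainder of the argument is bookkeeping over the three failure cases.
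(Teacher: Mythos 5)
Your proposal is correct, and it actually does more work than the paper, which states this lemma with no proof at all: the only argument the paper offers is the informal enumeration of the three failure scenarios immediately after the lemma, plus the worked $2\times 2$ example for failures on paths two and four. To the extent the paper has an implicit proof, it is the same bare counting argument used for its other capacity lemmas (two of the $n$ disjoint paths are reserved for $y^{j\ell}$ and $y^{k\ell}$, so $n-2$ plain symbols go through per round); your upper-bound paragraph reproduces exactly that. What you add --- and what the paper omits --- is the achievability half: the case analysis over how many failed paths are protection paths, and the verification that in the two-working-path-failure case the resulting $2\times 2$ system is invertible because every pair of columns of the generator in (\ref{eq:twovectors}) is independent (the MDS/Vandermonde condition tied to $q>n-2$). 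That is the genuinely nontrivial content of the lemma, and your proof is the more complete one.

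One caveat worth flagging, which you inherit from the paper rather than introduce yourself: you justify the nonvanishing determinant by saying ``distinct reduced indices yield distinct powers,'' but with the paper's literal coefficient rule $\alpha^{i \bmod (n-2)}$ the reduced indices of two \emph{working} paths need not be distinct, since the working indices are $\{1,\dots,n\}\setminus\{j,k\}$ and two of them can coincide modulo $n-2$ (e.g.\ indices $1$ and $5$ when $n-2=4$). For the argument to close, the $n-2$ working paths must be put in bijection with the $n-2$ distinct columns $1,\alpha,\dots,\alpha^{n-3}$ of (\ref{eq:twovectors}) --- which is clearly the intent of that display --- rather than indexed by the raw path number reduced mod $n-2$. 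With that reading your invertibility step is exactly the standard $2\times 2$ Vandermonde argument and the proof is sound.
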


There are three different scenarios for two-path attacks, which
can be described as follows:
\begin{compactenum}[i)]
\item If the two-path attacks occur in the backup protection paths $L_{lh}^j$
    and
    $L_{lh}^k$,
    then no recovery operations are required at the egress node.
\item If the two-path attacks  occur in one backup protection path, say
    $L_{lh}^j$, and one working path $L_{lh}^i$, then  recovery operations are
    required.
\item If the two-path attacks occur in two working paths, then in this
    case the two protection paths are used to recover the lost data. The
    idea of recovery in this case is to build a system of two linearly independent equations
    with two unknown variables.
\end{compactenum}

%%%%%%%%%%%%%%%%%%%%%%%%%%%%%%%%%%%%%%%%%%%%%%%%%%%%%%%%%%%%%%%%%%%%5
\section{Multiple Protection Paths Using S-MATE}
\label{sec:Tfilures}
In this section, we present S-MATE against $t$
attacked paths. We adopt  the same notations as in the
previous sections.  Assume also that the total number of attacks is $t$, and they
happen on arbitrary $t$ paths from the ingress node to the egress node.

%\begin{small}
\begin{figure*}[t]
\begin{center}
\begin{eqnarray}\label{eq:tFailuresScheme}
\begin{array}{|c|cccccc|}
\hline
\multicolumn{7}{|c|}{\mbox{ NPS-T Scheme}}
\\
\hline
&1&2&\ldots&j&\ldots&m=\lceil n/t\rceil\\
\hline    \hline
s_1 \rightarrow r_1 & y_1&x_1^1 &\ldots&x_1^{j-1}&\ldots &x_{1}^{m-1} \\
s_2 \rightarrow r_2 &  y_2&  x_2^1 &\ldots&x_2^{j-1}&\ldots&x_2^{m-1}  \\
\vdots&\vdots&\vdots&\vdots&\vdots&\vdots&\vdots\\
s_{t} \rightarrow r_{t} &  y_{t}&  x_{t}^1 &\ldots&x_{t}^{j-1}&\ldots&x_{t}^{m-1}  \\
\!\! s_{t+1} \! \rightarrow \!\! r_{t+1} \!\! &\!\! x_{t+1}^1\!&\! y_{t+1}&\ldots&x_{2t+1}^3&\! \ldots&\! x_{2t+1}^{m-1} \!\! \\
\vdots&\vdots&\vdots&\vdots&\vdots&\vdots&\vdots\\
s_{2t} \rightarrow r_{2t}&x_{2t}^1&y_{2t}&\ldots& x_{2t}^3&\ldots&x_{2t}^{m-1}  \\
\vdots\ddots&\vdots\ddots&\vdots\ddots&\vdots\ddots&\vdots\ddots&\vdots\ddots&\vdots\ddots\\
\!\! s_{jt+\ell}\!\! \rightarrow \!\!r_{jt+\ell}\!\!&\!\!x_{jt+\ell}^1\!&\!x_{jt+\ell}^2&\ldots&\! y_{jt+\ell}^3& \ldots&\! x_{jt+\ell}^{m-1} \!\! \\
\vdots\ddots&\vdots\ddots&\vdots\ddots&\vdots\ddots&\vdots\ddots&\vdots\ddots&\vdots\ddots\\
\!\! s_{t(m-1)+1} \rightarrow r_{t(m-1)+1}&x_{t(m-1)+1}^1&x_{t(m-1)+1}^2& \ldots&x_{t(m-1)+1}^j&\! \ldots&\! y_{t(m-1)+1}  \\
\vdots&\vdots&\vdots&\vdots&\vdots&\vdots&\vdots\\
s_{mt} \rightarrow r_{mt} & x_{mt}^1&x_{mt}^2&\ldots&x_{mt}^j&\ldots&y_{mt}\\
\vdots\ddots&\vdots\ddots&\vdots\ddots&\vdots\ddots&\vdots\ddots&\vdots\ddots&\vdots\ddots\\
\hline
\end{array}
\end{eqnarray}
\caption{The encoding scheme of $t$ link failures. $m=\lceil n/t\rceil$, $1 \leq j \leq m$ and $1 \leq \ell \leq t$. $t$ out of the $n$ connections carry encoded data. The coefficients are chosen over $\F_q$, for $q \geq n-t+1$.}
\end{center}
\end{figure*}
%\end{small}

Let $m=\lceil n/t\rceil$, and hence we have $m$ rounds per cycle. The encoding operations of NPS-T against $t$ attacks/failures are described by~(\ref{eq:tFailuresScheme}).
We can see that $y_\ell$ in general  is given by

\begin{eqnarray}\label{eq:y_NPS-T}
y_\ell=\sum_{i=1}^{(j-1)t} a_i^\ell x_i^{j-1} + \sum_{i=jt+1}^n a_i^\ell x_i^j  \nonumber \\ \mbox{   for  } (j-1)t+1 \leq \ell \leq jt,  1 \leq j \leq n.
\end{eqnarray}

%%%%%%%%%%%%% same as node failures

\subsection{Encoding Operations}
Assume that each connection path
$L_i$  has a unit capacity from an ingress source  $s_i$  to an
egress receiver $r_i$. The data sent from the source $s_i$ to the receiver $r_i$
is transmitted in rounds.
Under NPS-T, in every round $n-t$ paths are used to carry new data
($x_i^j$), and $t$ paths are used to carry protected data units.
There are $t$ protection paths.
Therefore, to treat all connections fairly,
there will be $n/t$ rounds in a cycle, and in each round the
capacity is given by $n-t$ from the ingress node to the egress node.

We consider the case in which all symbols $x_i^j$  belong to the same round.
The first $t$ sources transmit the first encoded data units
$y_1,y_2,\ldots,y_{t}$, and
in the second round, the next $t$ sources
transmit $y_{t+1},y_{t+2},\ldots,y_{2t}$, and so
on. The ingress and egress nodes  must keep
track of the round numbers. Let $ID_{s_i}$ and $x_{s_i}$ be the ID and
data initiated by the source $s_i$. Assume the round time $j$ in cycle
$\delta$ is given by $t^{j}_{\delta}$. Then, the source $s_i$ will send
$packet_{s_i}$ on the working path which includes
\begin{eqnarray}
Packet_{s_i}=(ID_{s_i}, x_{i}^\ell, t^\ell_\delta).
\end{eqnarray}
Also, the source $s_j$, which transmits on a protection path, will
send a packet $packet_{s_j}$:
\begin{eqnarray}
Packet_{s_j}=(ID_{s_j}, y_j, t^\ell_\delta),
\end{eqnarray}
where $y_k$ is defined as

\begin{eqnarray}\label{eq:y_NPS-T}
y_\ell=\sum_{i=1}^{(j-1)t} a_i^\ell x_i^{j-1} + \sum_{i=jt+1}^n a_i^\ell x_i^j  \nonumber \\ \mbox{   for  } (j-1)t+1 \leq \ell \leq jt,  1 \leq j \leq n.
\end{eqnarray} Hence, the protection
paths are used to protect the data transmitted in round $\ell$, which
are included in the $x^l_i$ data units.
 So, we have a system of $t$ independent equations at each round time
 that will be  used to recover at most $t$ unknown variables.
%and will then
%distribute the $y_k$ data units to the sources which are assigned for
%redundant data transmission in the current round.

The strategy NPS-T is a generalization of protecting against a single path failure shown in the previous section in which $t$
protection paths are used instead of one protection path in case of
one failure.

\begin{theorem}
Let $n$ be the total number of connections from the ingress node to the egress node. The
capacity of NPC defined over $\F_q$ against $t$ path attacks is given by
\begin{eqnarray}
\C_{\N}=(n-t)/(n)
\end{eqnarray}
\end{theorem}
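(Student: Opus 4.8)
The plan is to mirror the counting argument already used in the proof of the single-failure capacity lemma (the one establishing $\C=(k-1)/k$), now carrying $t$ protection paths instead of one. First I would restrict attention to a single cycle, which by the construction of the scheme in~(\ref{eq:tFailuresScheme}) consists of exactly $m=\lceil n/t\rceil$ rounds. In each round all $n$ disjoint paths transmit one packet apiece, so the total number of packets pushed from the ingress node to the egress node over a full cycle is $n\cdot m$. This denominator is the easy part.

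Next I would count the useful (plain) payload. The layout in~(\ref{eq:tFailuresScheme}) rotates the protection role among the $n$ connections in consecutive blocks of $t$: rows $1$ through $t$ carry encoded data $y_\ell$ in round $1$, rows $t+1$ through $2t$ carry it in round $2$, and in general the block $(j-1)t+1,\ldots,jt$ carries encoded data in round $j$. Hence in every round exactly $t$ of the $n$ connections carry encoded protection data (the $y_\ell$ of~(\ref{eq:y_NPS-T})) while the remaining $n-t$ carry fresh plain data, so the per-round useful capacity is $c_\ell=n-t$, independent of $\ell$. Summing over the cycle and dividing by the total traffic then gives the result directly:
\begin{eqnarray}
\C_{\N}=\frac{\sum_{\ell=1}^{m}c_\ell}{n\cdot m}=\frac{(n-t)\,m}{n\,m}=\frac{n-t}{n}.
\end{eqnarray}

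The step I expect to demand the most care is justifying the ``exactly $t$ encoded paths in every round'' claim uniformly over the cycle, together with the fairness property that dictates taking $m=\lceil n/t\rceil$ in the first place. When $t\mid n$ the block structure partitions the $n$ rows cleanly into $m=n/t$ groups, each serving as protection paths in precisely one round, and the per-round ratio $(n-t)/n$ is transparent. The only genuine subtlety is the boundary case $t\nmid n$, where the final block of a cycle contains fewer than $t$ rows; here I would argue that the ceiling in $m=\lceil n/t\rceil$ absorbs the remainder and that, since the useful-to-total ratio is $(n-t)/n$ in every full round, the normalized capacity is unchanged. Alternatively one may pad the row set to the next multiple of $t$, which does not affect the limiting ratio, so the capacity of NPS-T remains $(n-t)/n$ as claimed.
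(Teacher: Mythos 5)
Your proposal is correct and follows essentially the same route as the paper: the paper states this theorem without a formal proof, but the prose immediately preceding it (``in every round $n-t$ paths are used to carry new data \ldots and $t$ paths are used to carry protected data units'') together with the proof of the earlier single-failure lemma $\C=(k-1)/k$ is exactly the per-round counting argument you give, normalized over the $n\cdot m$ packets of a cycle. Your explicit treatment of the boundary case $t\nmid n$ is a small but genuine addition, since the paper silently assumes the blocks of $t$ protection rows partition the $n$ connections evenly.
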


\subsection{Proper Coefficients Selection}
One way to select the coefficients $a_j^\ell$ in each round such that we
have a system of $t$ linearly independent equations is by using the matrix $H$ shown in Eq.~(\ref{bch:parity}). Let $q$ be the order of a finite field, and
$\alpha$ be the $q^{th}$ root of unity. Then, we can use this matrix   to define the coefficients of the senders as:
\begin{eqnarray}\label{bch:parity} H =\left[
\begin{array}{ccccc}1&1&1&\ldots&1\\1 &\alpha &\alpha^2 &\cdots &\alpha^{n-1}\\1
&\alpha^2 &\alpha^4 &\cdots &\alpha^{2(n-1)}\\\vdots& \vdots &\vdots
&\ddots &\vdots\\1 &\alpha^{t-1} &\alpha^{2(t-1)} &\cdots
&\alpha^{(t-1)(n-1)}\end{array}\right].\end{eqnarray}
We make the following assumptions about the encoding operations.
\begin{compactenum}
\item  Clearly, if we have one failure $t=1$, then all coefficients will be
    one. The first sender will always choose the unit value.

\item  If we assume $t$ failures, then  $y_1,y_2,\ldots,y_t$
    are written as:
\begin{eqnarray}
y_1&=&\sum_{i=t+1}^nx_i^1, ~~~~~
y_2=\sum_{i=t+1}^n \alpha^{(i-1) }x_i^2,
\\
\label{eq:tcofficients}
y_j&=&\sum_{i=t+1}^n \alpha^{i(j-1) \mod (q-1)}x_i^\ell.
\end{eqnarray}
\end{compactenum}

The previous  equation gives the general theme to choose the coefficients at any particular round in any cycle. However, the encoded data $y_i$'s are defined as shown in~(\ref{eq:tcofficients}). In other words, for the first round in cycle one, the coefficients of the plain data $x_1,x_2,\ldots,x_t$ are set to zero.

%%%%%%%%%%%%%%%%%%%%%%%%%%%%%%%%%%%%%%%%%%%%%%%%%%%%%%%%%%%%%%%%%%%%5
\section{Network Protection Using Distributed Capacities and QoS}\label{sec:distributedcapacities}

In this section, we develop a network protection strategy in which some connection paths (network traffic) have high priorities (less bandwidth and high demand). Let $k$ be  the set of available connections (disjoint paths from ingress to egress nodes carrying network traffic). Let $m$ be the set of rounds in every  cycle. We assume that all connection paths might not have the same priority demand and working capacities. The assigned priority itself can be done by using management software. This  can also be achieved by looking at the packet headers and checking what kind of traffic they carry. Also, the priority can  depend on the source IP address.  Connections that carry applications with multimedia traffic have higher priorities than those of applications carrying data traffic. Therefore, it is required to design network protection strategies based on the traffic and sender priorities.

Consider that available working connections $k$ may use their bandwidth assignments in asymmetric ways. Some connections are less demanding in terms of bandwidth requirements than other connections that require full capacity frequently. Therefore, connections with less demand can transmit more protection packets, while other connections demand more bandwidth, and can therefore transmit fewer protection packets throughout transmission rounds. Let $m$ be the number of rounds and $t_i^\delta$ be the time of transmission in a cycle $\delta$ at round $i$. For a particular cycle $i$, let $t$ be the number of protection paths against $t$ link failures or  attacks that might affect the working paths. We will design a network protection strategy against $t$ arbitrary link failures  as follows. Let the source $s_j$ send $d_i$ data packets and $p_i$ protection packets such that $d_j+p_j=m$. That is,

\begin{eqnarray}
\sum_{i=1}^k (d_i+p_i)=km.
\end{eqnarray}
In general, we do not assume that $d_i =d_j$ and $p_i=p_j$.

\begin{figure}
\begin{eqnarray}\label{eq:tfailures2}
\begin{array}{|c|ccccccc|}
\hline\multicolumn{8}{|c|}{\mbox{ QoS S-MATE Scheme}}
\\
\hline
& \multicolumn{7}{|c|}{\mbox{ round time cycle 1 }} \\ \hline
\hline
&1&2&3&4&\ldots&m-1&m\\
\hline    \hline
L_{\ell h}^1 & y^{11}&x^{11} &x^{12}&y^{12}&\ldots &y^{1p_1}&x^{1d_1} \\
L_{\ell h}^2 &  x^{21}& y^{21} &x^{22}&x^{23}&\ldots&x^{2d_2} &y^{2p_2} \\
\vdots&\vdots&\vdots&\vdots&\vdots&\vdots&\vdots&\vdots\\
L_{\ell h}^{i} &  y^{i1}&  x^{i1} &x^{i2}&y^{i2}&\ldots&y^{ip_i}  &x^{id_i}\\
\vdots&\vdots&\vdots&\vdots&\vdots&\vdots&\vdots&\vdots\\
L_{\ell h}^{j} &  x^{j1}&  x^{j2} &y^{j1}&x^{j3}&\ldots&x^{jd_j}&y^{jp_j}  \\
\vdots&\vdots&\vdots&\vdots&\vdots&\vdots&\vdots&\vdots\\
L_{\ell h}^k & x^{k1}&y^{k1}&x^{k2}&x^{k4}&\ldots&y^{kp_k}&x^{kd_k}\\
\hline
\end{array}
\end{eqnarray}
\end{figure}
The encoded data $y^{i\ell}$ is given by
\begin{eqnarray}
y^{i\ell} =\sum_{k=1,y^{k\ell} \neq y^{k\ell}} x^{k\ell}.
\end{eqnarray}

We assume that the maximum number of attacks/failures that might occur in a particular cycle is $t$. Hence, the number of protection paths (paths that carry encoded data) is $t$. The selection of the working and protection paths in every round is done by using a demand-based priority function at the senders's side. It will also depend on the traffic type and service provided on these protection and working connections. See Fig.~\ref{fig:netfig4} for  ingress and egress nodes with five disjoint connections.

In Eq.~(\ref{eq:tfailures2}), every connection $i$ is used to carry $d_i$ unencoded data $x^{i1},x^{i2},\ldots,x^{id_i}$ (working paths) and $p_i$ encoded data $y^{i1},y^{i2},\ldots,y^{ip_i}$ (protection paths) such that $d_i+p_i=m$.

\begin{lemma}\label{lem:nps-t2}
Let $t$ be the number of connection paths carrying encoded data in every round. The  network capacity $C_\N$ is then given by
\begin{eqnarray}
C_\N=k-t.
\end{eqnarray}
\end{lemma}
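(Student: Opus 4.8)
The plan is to mirror the counting argument used in the proof of the \textbf{NPS} lemma and in the NPS-T theorem ($\C_{\N}=(n-t)/n$), adapting it to the asymmetric (QoS) assignment of working and protection slots. The central observation is a per-round invariant: although the scheme in~(\ref{eq:tfailures2}) lets each connection $L_{\ell h}^i$ carry a different number $d_i$ of plain units and $p_i$ of encoded units, with $d_i+p_i=m$, in \emph{every} individual round exactly $t$ of the $k$ paths are designated as protection paths and the remaining $k-t$ carry plain data. First I would make this invariant explicit from the construction, noting that the demand-based priority function merely permutes which paths play the protection role from round to round without changing their number, since $t$ is the stipulated maximum number of tolerated failures per cycle.

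Next I would record the global bookkeeping identity and use it as a consistency check. Summing the per-connection totals gives $\sum_{i=1}^k(d_i+p_i)=km$, matching the hypothesis; counting instead column-by-column (round-by-round) yields $\sum_{i=1}^k p_i=tm$ and $\sum_{i=1}^k d_i=(k-t)m$, because each of the $m$ rounds contributes exactly $t$ encoded entries and $k-t$ plain entries. Agreement of these two counts confirms that the per-round split $k=(k-t)+t$ is well defined under the asymmetric allocation.

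I would then argue that the $t$ encoded paths carry no new information. By the coefficient-selection scheme of the previous section (the matrix $H$ of~(\ref{bch:parity}), whose rows form a Vandermonde/BCH-type parity structure over $\F_q$ for $q$ sufficiently large, exactly as in the NPS-T construction), the $t$ encoded symbols in any round are linear combinations of the plain symbols arranged so that any $t$ erased data units are recoverable from $t$ linearly independent equations; the relevant $t\times t$ Vandermonde minors are nonsingular. Hence the protection entries are pure redundancy, and the number of genuinely new data units delivered per round equals the number of plain entries, namely $k-t$. This gives $C_\N=k-t$, the same quantity as the NPS-T normalized capacity but stated here in un-normalized per-round form.

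The step I expect to be the main obstacle is justifying the per-round invariant rigorously in the presence of the asymmetric $d_i,p_i$: one must verify that the demand-based assignment can always be realized with \emph{exactly} $t$ protection paths in each round. Equivalently, one needs a $k\times m$ binary matrix (entry $1$ when path $i$ is a protection path in round $j$) with prescribed row sums $p_i$ and every column sum equal to $t$; existence is a Gale--Ryser-type feasibility condition whose governing necessary constraint is precisely $\sum_i p_i=tm$ together with $0\le p_i\le m$. Once this realizability is established, the capacity count is immediate, and the result follows as a direct generalization of the single-failure ($t=1$) case already treated.
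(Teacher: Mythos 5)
Your proposal is correct and rests on exactly the same counting step as the paper's one-sentence proof: in every round $t$ of the $k$ disjoint paths carry encoded (protection) data, so $k-t$ working paths deliver new data, giving $C_\N = k-t$. The additional material you supply --- the column-sum consistency check $\sum_i p_i = tm$, the Vandermonde justification that the encoded entries are pure redundancy, and the Gale--Ryser-type realizability of the asymmetric $(d_i,p_i)$ allocation --- is elaboration the paper omits entirely, and it usefully closes the gap of whether the per-round invariant can actually be achieved, but it does not change the route of the argument.
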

\begin{IEEEproof}
The proof is forward straight from the fact that $t$ protection paths exist in every round among the $k$ available disjoint paths, and hence $k-t$ working paths are available throughout all $m$ rounds.
\end{IEEEproof}

%%%%%%%%%%%%%%%%%%%%%%%%%%%%%%%%%%%%%%%%%%%%%%%%%%%%%%%%%%%%%%%%%%%
\section{Practical Aspects}\label{sec:implementation}
The network protection strategy against
a link failure is deployed in two processes: encoding and decoding
operations. The  encoding operations are performed at the ingress router,
which will send the encoded data depending on the
 adapted strategy throughout the available multipaths. The packets are sent in rounds. Each packet is marked by using the current round time and the path number. This is achieved till all packets are sent throughout all paths.

The decoding operations are performed at the receiver
side (egress router), which will apply  XOR operations to all incoming traffic to recover the lost packets in case of a single link failure.  If the receivers can tolerate a large amount of delay as in the
case of storage files, then,  the S-MATE strategy can be used. For applications
that cannot tolerate packet delays (delay sensitive traffic) such as multimedia or TV streams, the
S-MATE strategy can be used. We also note that the delay will occur only
when a failure occurs in the protection paths.

The transmission is done in rounds, and hence  linear combinations of data have
to be from the same round. This can be achieved by using the round time
that is included in each packet sent by a sender.

The core routers will manage the available multipaths by using network management software.  In this case, the number of link disjoint paths are known and provisioned in advance. Furthermore, the routers will decide which protection strategies will be used depending on the network conditions and number of failures.
\bigskip

%\newpage

%%%%%%%%%%%%%%%%%%%%%%%%%%%%%%%%%%%%%%%%%%%%%%%%%%%%%%%%%%%%%55
\section{Conclusion}\label{sec:conclusion}
In this paper, we have proposed the S-MATE scheme (secure multipath adaptive traffic engineering) for operational networks. We have utilized network coding of transmitted packets to protect the traffic between two network core nodes (routers, switches, etc.) that could exist in a cloud network. Our assumption is based on the fact that core network nodes share multiple edge disjoint paths. S-MATE  can secure network traffic against  single link attacks/failures by adding redundancy in one of the operational paths between the sender and receiver. It can also be used to secure network traffic against two and multiple attacks/failures. The proposed scheme can be  built to secure operational networks  including optical and multipath adaptive networks. In particular, it can provide security services at the IP and data link layers.

\begin{figure}[t]
\begin{center}
  % Requires \usepackage{graphicx}
  \includegraphics[width=9cm,height=4cm]{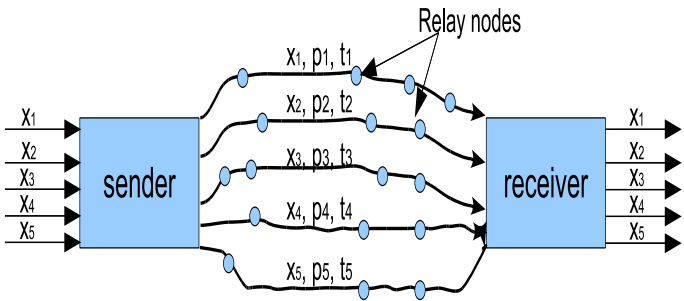}
  \caption{Working and protection edge disjoint paths between two core nodes (ingress and egress nodes). Every path $L_i$ carries encoded and plain packets depending on the traffic priority $p_i$.}
  \label{fig:netfig4}
  \end{center}
\end{figure}

\bigskip
%\newpage

%\scriptsize
\bibliographystyle{plain}

\bibliographystyle{ieeetr}
\end{document}